\documentclass[12pt,a4paper]{article}
\usepackage[T1]{fontenc}
\usepackage{lmodern}
\usepackage{amsmath,amssymb,amsthm,mathtools,bm}
\usepackage{microtype}
\usepackage{geometry}
\usepackage{hyperref}
\usepackage{cite}
\usepackage{booktabs}
\usepackage{xcolor}
\usepackage{tikz}
\hypersetup{colorlinks=true,linkcolor=blue!55!black,citecolor=blue!55!black,urlcolor=blue!55!black}

\newcommand{\R}{\mathbb R}
\newcommand{\Z}{\mathbb Z}
\newcommand{\dd}{\mathrm d}
\newcommand{\e}{\mathrm e}

\newcommand{\AdS}{\mathrm{AdS}}
\newcommand{\cM}{\mathcal M}
\newcommand{\cQ}{\mathcal Q}
\newcommand{\cO}{\mathcal O}
\newcommand{\cH}{\mathcal H}
\newcommand{\bulk}{\mathrm{bulk}}

\newcommand{\bdy}{\mathrm{bdy}}
\newcommand{\ren}{\mathrm{ren}}

\newtheorem{proposition}{Proposition}

\title{\Large \textbf{Type IIB Axion--Dilaton Wormholes \\ and \\ the BPS Limit Hessian}}
\author{Soo-Jong Rey\\[0.2cm]
{\sl Kwangwoon University} \\
{\sl Seoul, Korea}}
\date{}
\begin{document}
\maketitle

\begin{abstract}
I revisit Type-IIB axion--dilaton Euclidean saddles in a specified axion charge sector. In that sector, the solution with \(E=0\) is the BPS instanton, while \(E>0\) gives non-BPS wormholes with a smooth throat.  The two cases solve the same radial equations but define different fluctuation problems.  For the \(E=0\) instanton, the Hamiltonian constraint, gauge quotient, charge-sector boundary condition, and removal of collective zero modes reduce the quadratic action to a physical Hessian. This Hessian factorizes,
\begin{equation*}
 \cH_\nu=\mathcal Q_\nu^\dagger\mathcal Q_\nu .
\end{equation*}
I interpret this as an endpoint theorem, beyond a stability theorem for the full \(E>0\) wormhole. This puts Type IIB wormhole spectra on firmer grounds.  I also separate the connected two-ended wormhole throat from its long-distance two-end multipole operator term.  Once the coefficient matrix \(C^{ij}\) is derived, the different-component and same-component placements of the two end insertions are terms in the same quadratic expression.  Removing either term requires a genuine projection or cancellation. 
\end{abstract}

\section{Introduction}

A Euclidean axion--dilaton saddle must first be assigned to a charge sector.  Here the charge is the conserved axion, or equivalently form-field, charge.  Variations used in a Hessian calculation are tangent to that sector.  Changing the charge gives a different saddle, not a fluctuation of the same saddle.  In a chosen charge sector, the solution with \(E=0\) is the BPS instanton.  The solutions with \(E>0\) are non-BPS wormholes with a smooth throat \cite{Rey1991}.  This is the basic ordering I use below. With these, in this paper, I revisit the Type IIB axion-dilaton saddles - instantons and wormholes.\footnote{As such, pure axion-form field wormholes~\cite{GiddingsStrominger} stay outside what I analyze below.} 

Three objects must be separated.  A two-ended throat is a connected Euclidean geometry with two asymptotic non-compact ends.  A neck-cut geometry is one side of that solution after cutting at the symmetric, minimal sphere; it is stationary only after the correct neck data and boundary term are specified.  A two-end operator term is the long-distance, bilocal operator obtained when a small throat is multipole-expanded into two end insertions.  These objects are related, but they are not the same object.

The first part of the paper sets up the charge-sector radial equations in flat four dimensions and in \(\AdS_5\).  In flat space, the radial first integral has the BPS instanton at \(E=0\) and non-BPS wormholes for \(E>0\), where $E \ge 0$ is a first-integral.  Cutting a complete two-ended solution at the symmetric point changes the variational problem.  The metric momentum vanishes at the neck because the minimal sphere is totally geodesic, but the scalar momentum need not vanish.  A generic half-geometry is therefore an Affleck constrained instanton \cite{Affleck1981}.  A Legendre transform can make a momentum-fixed half-geometry stationary, but stationarity alone does not determine the Morse index.

The central fluctuation statement concerns the \(E=0\) BPS instanton.  Let \(\Phi_0\) denote this solution in the chosen charge sector, and let \(\mathcal F\) be the first-order Witten--Nester/Prasad--Sommerfield--Bogomolny map \cite{Witten1981,Nester1981,Prasad1975,Bogomolnyi1976} obtained from the form-field Routhian.  After imposing the Hamiltonian constraint, fixing the gauge, imposing the charge-sector boundary condition, and removing collective zero modes, a physical perturbation \(\eta\) has quadratic action
\begin{equation}
 \delta^2S_{\rm phys}=\langle \eta,\cH_\nu\eta\rangle,
 \qquad
 \cH_\nu=\mathcal Q_\nu^\dagger\mathcal Q_\nu,
 \qquad
 \mathcal Q_\nu=\left.D\mathcal F\right|_{\Phi_0,\mathcal D_{\nu,{\rm phys}}} .
 \label{eq:introsquare31}
\end{equation}
The singular value decomposition spectrum is positive-semidefinite. The conformal factor is not a missing negative spectrum perturbation of this operator.  It is removed before \(\cH_\nu\) is defined, because Eq.~\eqref{eq:introsquare31} acts on the constraint-reduced physical domain.  The statement is therefore a theorem about the \(E=0\) BPS instanton, distinct from a theorem about the unreduced Hessian of the \(E>0\) non-BPS wormhole.

The \(\AdS_5\) problem follows the same endpoint logic but has a different boundary-value problem.  The flat and AdS reductions share a charge-sector radial structure, yet they differ in dimension, cosmological constant, scalar normalization, boundary data, and self-adjoint domain.  In the AdS case, the scalar--metric fluctuations are described by radial Sasaki--Mukhanov variables~\cite{Sasaki1986}.  On the BPS-compatible renormalized domain, the physical Hessian again has the singular-value factorized form Eq.~\eqref{eq:introsquare31}.  The full \(E>0\) AdS wormhole, if regular in the chosen scalar manifold, still requires its own holographic boundary analysis.

The final part concerns the large-distance operator description of a small throat.  If \(I_i^{(A)}\) denotes the \(i\)-th end insertion on the parent universe \(M_A\), the total source is
\begin{equation}
 I_i=\sum_A I_i^{(A)} .
\end{equation}
A quadratic coupling with coefficient matrix \(C^{ij}\) gives
\begin{equation}
 {1\over2}I_iC^{ij}I_j
 =\sum_{A<B}I_i^{(A)}C^{ij}I_j^{(B)}
 +{1\over2}\sum_A I_i^{(A)}C^{ij}I_j^{(A)} .
 \label{eq:introplacement31}
\end{equation}
The first term places the two end insertions on different parent universes.  The second places both insertions on the same parent universe.  Removing one while retaining the other requires a charge projection, zero-mode condition, boundary condition, supersymmetry constraint, or contour cancellation acting on the same coefficient matrix.

Coleman's baby-universe calculus gives the topology-specific language for this statement \cite{Coleman1988,Coleman1988b}.  The creation and annihilation operators are \(a_i,a_i^\dagger\), with \( [a_i,a_j]=[a_i^\dagger,a_j^\dagger]=0 \), obeying the Heisenberg algebra \(
 [a_i,a_j^\dagger]=\delta_{ij} .
\)
The operators that multiply local parent-universe terms are instead spacetime-independent Hermitian combinations, schematically \(A_i=a_i+a_i^\dagger\) in a self-conjugate basis.  They commute and can be diagonalized,
\(  [A_i,A_j]=0,
 A_i|\alpha\rangle=\alpha_i|\alpha\rangle .
\)
At fixed \(\alpha\), the parent theory has definite shifted couplings.  Averaging over \(\alpha\) gives correlations among components, not a Markovian radiation bath.

Rubakov's lower-dimensional models~\cite{NirovRubakov,RubakovBaby,RubakovLongRange} motivate the emission--channel question, but they did not compute the Type-IIB coefficient matrix. I adopt the Feynman--Vernon reduction as a trace-preserving quantum channel for eliminated baby or parent universes \cite{FeynmanVernon1963}: \(\mathcal F[J,J]=1\). Bloch--Nordsieck and Pauli--Fierz systems of soft photons are familiar examples of exact treatment for a linearly displaced quantum oscillator~\cite{BlochNordsieck1937,PauliFierz1938}.  They are not wormhole models, as generic soft photons remain radiative modes in the parent universe. Nevertheless, in the strict zero-frequency limit where the eliminated variable becomes a commuting source label, the systems are Coleman-like.

The claims are therefore ordered as follows.  Sections~\ref{sec:IIB}--\ref{sec:numerics} treat the charge-sector equations and the \(E=0\) Hessian.  Section~\ref{sec:unitaritysewing} treats the conditional large-distance throat term.  The BPS endpoint Hessian is not used to claim full non-BPS stability.  The large-distance throat term is not used to prove the classical saddle.  The charge sector comes first; the BPS endpoint Hessian comes second; the \(E>0\) wormhole and the large-distance throat term are separate later questions.

\section{Uniform charge-sector equations in flat and AdS reductions}
\label{sec:IIB}

\subsection{Reduced Euclidean functional}

We fix a common notation for the flat four-dimensional and asymptotically $\AdS_5$ problems.  This section does not rederive the ten-dimensional truncation.  It records the same charge-sector axion--dilaton structure before specializing the radial equation to $n=4$ and $n=5$.

Let $n$ denote the spacetime dimension and let $d=n-1$ be the dimension of the round angular slice.  The active scalar is denoted by $\varphi$ and the axion as $\chi$.  In the four-dimensional flat reduction $\varphi$ will be the canonically normalized scalar $D$ used in Section~\ref{sec:flat4}; in the universal Type-IIB $\AdS_5\times S^5$ truncation $\varphi$ is the dilaton $\phi$.  The Euclidean scalar representation of the charge-sector saddle functional may be written
\begin{equation}
 S_{n,E}=-{1\over2\kappa_n^2}\int_{\cM_n}\dd^nx\sqrt g
 \left[R-\Lambda_n-{1\over2}(\partial\varphi)^2
 +{1\over2}\e^{2\beta_n\varphi}(\partial\chi)^2\right]+S_{\bdy} .
 \label{eq:DEaction}
\end{equation}
Here
\begin{equation}
 \Lambda_4=0,
 \qquad
 \Lambda_5=-{12\over L^2},
 \qquad
 \beta_5=1,
\end{equation}
while $\beta_4$ depends on the normalization of the four-dimensional scalar inherited from the compactification or dual form convention.  The sign of the axion term in Eq.\eqref{eq:DEaction} is the Euclidean charge-sector, or dualized, sign.  It is not a Lorentzian ghost sign.  The equivalent form-field/Routhian description is the more invariant way to impose the conserved charge; the scalar presentation is used here only to write compact reduced equations.

The field equations following from Eq.\eqref{eq:DEaction} are
\begin{align}
 \nabla_\mu\!\left(\e^{2\beta_n\varphi}\nabla^\mu\chi\right)&=0,
 \label{eq:Daxioneq}\\
 \nabla^2\varphi+\beta_n\e^{2\beta_n\varphi}(\nabla\chi)^2&=0,
 \label{eq:Ddileq}\\
 R_{\mu\nu}-{1\over2}\partial_\mu\varphi\partial_\nu\varphi
 +{1\over2}\e^{2\beta_n\varphi}\partial_\mu\chi\partial_\nu\chi
 -{\Lambda_n\over n-2}g_{\mu\nu}&=0 .
 \label{eq:DEinstein}
\end{align}
For $n=5$, $\beta_n=1$ and $\Lambda_n=-12/L^2$, these reduce to the standard Euclidean non-extremal D-instanton equations in the $\AdS_5\times S^5$ truncation \cite{Rey1999, BergshoeffDinst,BergshoeffAdS}.  For $n=4$, $\Lambda_n=0$, the same equations describe the asymptotically flat charge-sector system used in the earlier axionic-string/wormhole analysis \cite{Rey1991,BergshoeffFlat}.  The different symbols $D$ and $\phi$ used below are therefore normalization choices for the same reduced structure, not different mechanisms.

\subsection{Radial ansatz and throat equation}

For the $O(n)$-invariant saddle take
\begin{equation}
 \dd s_n^2=N(\tau)^2\dd\tau^2+a(\tau)^2\dd\Omega_d^2,
 \qquad d=n-1,
 \label{eq:Dansatz}
\end{equation}
with $\varphi=\varphi(\tau)$ and $\chi=\chi(\tau)$, or equivalently with the dual $(n-1)$-form flux fixed through the sphere.  The axion equation gives the conserved charge
\begin{equation}
 {a^d\over N}\,\e^{2\beta_n\varphi}\chi'=q_\chi,
 \label{eq:Dcharge}
\end{equation}
where a prime denotes differentiation with respect to $\tau$.  Eliminating the axion in favor of the the conserved charge gives a one-dimensional Routhian.  In proper radial gauge $N=1$, the Hamiltonian constraint takes the uniform form
\begin{equation}
 \dot a^{\,2}=1-{\varepsilon_n\over a^{2n-4}}
 -{\Lambda_n\over (n-1)(n-2)}a^2 .
 \label{eq:unifiedThroat}
\end{equation}
The positive parameter $\varepsilon_n$ is the sub-extremal invariant in the metric equation.  The BPS endpoint has $\varepsilon_n=0$.  The non-BPS wormholes have $\varepsilon_n>0$ and a minimal sphere at a zero of the right-hand side.

The two specializations used in the rest of the paper are therefore
\begin{align}
 n=4:\qquad
 \dot a^{\,2}&=1-{a_0^4\over a^4},
 &a_0^4&={\kappa_4^2 E\over12},
 \label{eq:flatFromUnified}\\
 n=5:\qquad
 (a')^2&=1+{a^2\over L^2}-{\widetilde q^{\,2}\over a^6},
 &\varepsilon_5&=\widetilde q^{\,2}.
 \label{eq:adsFromUnified}
\end{align}
Thus the difference between the flat and AdS analyses is not the BPS mechanism but the radial Hamiltonian constraint and boundary problem.  In four flat dimensions, the charge term scales as $a^{-4}$ and no holographic subtraction appears.  In $\AdS_5$ space, the charge term scales as $a^{-6}$ and the cosmological term, counterterms, scalar boundary data, and normalizable modes are part of the problem from the beginning.

\subsection{Extremal class and BPS endpoint}

The Euclidean $SL(2,\R)$ charges may be organized into
\begin{equation}
 \cQ=\begin{pmatrix}q_3&i q_+\\ i q_-&-q_3\end{pmatrix},
 \qquad q^2\equiv-\det\cQ=q_3^2-q_+q_- .
 \label{eq:chargeMatrix}
\end{equation}
The conjugacy class of $\cQ$ separates the solutions:
\begin{center}
\begin{tabular}{lll}
\toprule
$q^2=0$ & extremal & BPS D-instanton endpoint, null scalar geodesic;\\
$q^2>0$ & super-extremal & one-ended, generically singular solutions;\\
$q^2<0$ & sub-extremal & candidate two-ended wormholes.\\
\bottomrule
\end{tabular}
\end{center}
The word 'sub-extremal' refers to the invariant charge class and to the sign of the metric deformation.  It is not a statement that the complete renormalized wormhole action is smaller than a specified pair of one-ended BPS boundary problems.

At $q^2=0$, the scalar trajectory obeys the first-order relation
\begin{equation}
 \dd\chi=\pm \e^{-\beta_n\varphi}\dd\varphi
 \label{eq:BPSrelation}
\end{equation}
up to the normalization of $\chi$.  The scalar stress tensor in Eq.\eqref{eq:DEinstein} then vanishes.  The Einstein-frame metric is undeformed: flat in the four-dimensional reduction and Euclidean $\AdS_5$ in the holographic truncation.  This is the BPS instanton endpoint.  The wormhole is obtained by turning on the sub-extremal invariant $\varepsilon_n$ and is therefore a non-BPS deformation of this endpoint, not a separate species of saddle.

\section{Asymptotically flat four-dimensional solutions}
\label{sec:flat4}

\subsection{Reduced equations and the first integral}

We first specialize the uniform equations of Section~\ref{sec:IIB} to the asymptotically flat four-dimensional problem in which the earlier BPS endpoint result was obtained.  Take
\begin{equation}
 \dd s_4^2=N(\tau)^2\dd\tau^2+a(\tau)^2\dd\Omega_3^2,
 \qquad H_3=Q\,\omega_3,
 \qquad \int_{S^3}\omega_3=2\pi^2,
 \label{eq:flatansatz}
\end{equation}
with a radial dilaton $D(\tau)$. After fixing the axion charge, the scalar equation admits the first integral
\begin{equation}
 \bigl(a^3\dot D\bigr)^2-U_0\e^{2\beta D}=-E,
 \qquad U_0\propto Q^2,
 \label{eq:firstintegral4}
\end{equation}
where a dot denotes proper-radial differentiation. The Hamiltonian constraint is the $D=4$, $\Lambda_4=0$ specialization of Eq.\eqref{eq:unifiedThroat}:
\begin{equation}
 \dot a^2=1-\frac{a_0^4}{a^4},
 \qquad a_0^4=\frac{\kappa_4^2 E}{12},
 \label{eq:flatFriedmann}
\end{equation}
within the normalization of the earlier solution \cite{Rey1991}. The classification is immediate:
\begin{align*}
 E<0 &: \text{one-ended singular solutions},\\
 E=0 &: \text{extremal D-instanton with }a(\tau)=|\tau|,\\
 E>0 &: \text{non-BPS wormholes with a smooth Einstein-frame neck at }a=a_0.
\end{align*}
Thus $E$ is a first integral of the field equations. Once asymptotic scalar data are imposed at both ends, it is fixed by a two-point problem with prescribed boundary data rather than integrated as a modulus of distinct full saddles.

The minimal sphere is totally geodesic:
\begin{equation}
 \dot a(0)=0,
 \qquad K_{ij}=\frac{\dot a}{a}h_{ij}=0.
 \label{eq:totallygeodesic}
\end{equation}
This fact controls the variational analysis.

\subsection{Neck-cut geometry and constrained instantons}
\label{sec:constraint}

Consider one half, $\tau\in[0,\infty)$, at charge $Q$ and fixed asymptotic value $D_+$. The on-shell variation has the form
\begin{equation}
 \delta S_{\rm half}
 =\Pi_D(0)\,\delta D_0
 +\Pi^{ij}_{g}(0)\,\delta h_{ij}^{(0)}
 +\delta S_{\infty},
 \label{eq:halfbdyvar}
\end{equation}
where
\begin{equation}
 \Pi_D(0)=2\pi^2 a_0^3\dot D(0),
 \qquad
 \Pi^{ij}_{g}(0)=-\frac{\sqrt h}{2\kappa_4^2}
 \left(K^{ij}-Kh^{ij}\right)=0.
 \label{eq:momenta}
\end{equation}
The asymptotic term vanishes for the prescribed data. Gravity therefore imposes no additional neck condition, but the scalar does. Generic members have $\dot D(0)\neq0$ and are not stationary when $D_0$ is allowed to vary.

There are two distinct variational problems. In the constrained ensemble, one fixes $D_0$ or an equivalent invariant neck functional. The half-solution is then stationary only on that constraint surface, and the integral over the collective coordinate requires the Affleck collective-coordinate/Faddeev--Popov measure rather than a flat $\dd E$ measure \cite{Affleck1981}. In the momentum-fixed, or Neumann, ensemble, one Legendre transforms the neck functional,
\begin{equation}
 \widetilde S_{\rm half}=S_{\rm half}-\int_{\Sigma_0}D\,\Pi_D,
 \label{eq:neckLegendre}
\end{equation}
and holds the prescribed canonical momentum $\Pi_D(0)=\pi_0$ fixed. The on-shell variation becomes
\begin{equation}
 \delta\widetilde S_{\rm half}\big|_{\Sigma_0}
 =-\int_{\Sigma_0} D\,\delta\Pi_D,
 \label{eq:Neumannvariation}
\end{equation}
which vanishes for $\delta\Pi_D(0)=0$. The source-free Neumann condition $\Pi_D(0)=0$ is a further restriction, not the definition of the momentum-fixed ensemble.

These two problems can share the same classical profile while differing in boundary functional and admissible fluctuations. Stationarity in the momentum-fixed ensemble says only that the linear boundary obstruction has been removed. It does not imply that the bulk-plus-boundary Hessian is non-negative. Negative-mode statements made in one ensemble cannot be transferred to the other without transforming both the quadratic boundary term and the self-adjoint domain.

\begin{proposition}[Status of the full two-ended solution]
Let a smooth solution of Eqs.\eqref{eq:firstintegral4}--\eqref{eq:flatFriedmann} connect two asymptotic regions with fixed $(Q,D_+,D_-)$. If the scalar and metric are smooth through the minimal sphere, then the complete geometry is stationary under variations preserving the two asymptotic data. The two oriented scalar boundary terms produced by an artificial cut at the neck cancel. Generically the resulting solution with prescribed boundary data is isolated.
\end{proposition}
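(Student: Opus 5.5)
The plan has three steps, one for each claim of the proposition.

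\textbf{Stationarity.} I will cut the smooth two-ended solution at the minimal sphere $\Sigma_0$ ($\tau=0$) and write the full action as the sum of two half-actions, $S=S_{\rm half}^{+}+S_{\rm half}^{-}$. Each half obeys the on-shell variation formula Eq.~\eqref{eq:halfbdyvar}. On the $\tau\le0$ half the outward normal at $\Sigma_0$ is reversed, so its neck boundary terms appear with the opposite sign. The asymptotic terms $\delta S_\infty^{\pm}$ vanish because the variations preserve $(Q,D_+,D_-)$.

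\textbf{Cancellation at the cut.} The metric neck terms vanish separately on each side, since $\Pi^{ij}_g(0)=0$ by the totally geodesic property Eq.~\eqref{eq:totallygeodesic}. The scalar neck terms combine to
\begin{equation}
 \delta S=\bigl[\Pi_D(0^+)-\Pi_D(0^-)\bigr]\delta D_0 .
\end{equation}
Smoothness of $D$ and $a$ through the neck gives $\dot D(0^+)=\dot D(0^-)$ and $a(0^+)=a(0^-)=a_0$. By Eq.~\eqref{eq:momenta} the two oriented momenta therefore agree, and the bracket vanishes. I also need $\delta D_0$ to be single-valued across the cut, and it is, because admissible variations of the uncut manifold are continuous there. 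Combined with the bulk field equations, this shows that the first variation vanishes for all variations fixing the asymptotic data. It is worth stating explicitly that $\dot D(0)\neq0$ is allowed: each half is individually only a constrained instanton, and stationarity holds only for the whole geometry.

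\textbf{Isolation.} The solution space at fixed $Q$ is parametrized by $(E,\,D_0)$ together with the translation of $\tau$. The latter is removed by centering the neck at $\tau=0$, or fixing $D_0$ plays the same role. Integrating Eq.~\eqref{eq:firstintegral4} outward from the neck with $\dd\tau=\dd a/\sqrt{1-a_0^4/a^4}$ defines a map
\begin{equation}
 \Psi:(E,D_0)\mapsto(D_+,D_-),
\end{equation}
where $D_\pm=D_0\pm\int_0^\infty\dot D\,\dd\tau$, with the sign of $\dot D$ chosen by the branch of the first integral. The map goes from two parameters to two data. The generic-isolation claim then follows from the inverse function theorem wherever the Jacobian $\partial(D_+,D_-)/\partial(E,D_0)$ is nonzero. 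The degenerate locus is a codimension-one set, which is the meaning of ``generically.''

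\textbf{Main obstacle.} The hard step is computing this Jacobian. Symmetric solutions have $\dot D(0)=0$. For those, $D_+-D_-$ depends only on $E$ through a period-type integral, because the $D_0$-dependence enters only through $e^{2\beta D}$ scaling. $D_0$ then shifts $D_++D_-$, and I will use this to decouple the two directions. I will first try the substitution $x=e^{\beta D}$, which turns the integral into an elementary one. That should let me show that $\partial(D_+-D_-)/\partial E\neq0$ away from isolated values. If the explicit computation is unwieldy, I will fall back on analyticity of $\Psi$ in $(E,D_0)$ and note that the Jacobian is not identically zero. That suffices for genericity.
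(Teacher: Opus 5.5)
Your stationarity and cut-cancellation steps are the paper's argument. The key ingredients are the opposite outward normals at the shared neck, the common $D_0$ and $\delta D_0$, continuity of $\dot D$, and $\Pi^{ij}_g(0)=0$ on the totally geodesic sphere. Your isolation step uses the paper's count of two integration constants against the two data $D_\pm$. The paper stops at asserting that this map is generically nondegenerate.

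What you add is the plan for the Jacobian, and as written it fails because $(E,D_0)$ is not a chart exactly at the symmetric solutions. Evaluating Eq.~\eqref{eq:firstintegral4} at the neck gives $E+a_0^6\dot D(0)^2=U_0\e^{2\beta D_0}$ with $a_0^6\propto E^{3/2}$, so $\dot D(0)$ enters only squared. The parametrization is therefore two-sheeted. The solutions with $\dot D(0)=0$ form its fold $U_0\e^{2\beta D_0}=E$, and on that fold $D_+=D_-$ identically. This has three consequences on the symmetric family:
\begin{itemize}
\item $D_+-D_-$ is not a nontrivial function of $E$;
\item $D_0$ cannot be shifted at fixed $E$;
\item $\Psi$ has a square-root branch, since near the fold $D_+-D_-\propto\pm\sqrt{U_0\e^{2\beta D_0}-E}$.
\end{itemize}
So neither the inverse function theorem nor your analyticity fallback applies at precisely these solutions. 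Your quadrature $D_\pm=D_0\pm\int_0^\infty\dot D\,\dd\tau$ with one sign per half also fails whenever the turning point of $D$ lies inside one half.

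The fix is to use genuine integration constants. With $\dd h=\dd\tau/a^3$, Eq.~\eqref{eq:firstintegral4} becomes $(\dd D/\dd h)^2=U_0\e^{2\beta D}-E$. This is solved by $\e^{-\beta D}=\sqrt{U_0/E}\,\cos\bigl(\beta\sqrt E\,(h-h_*)\bigr)$. By Eq.~\eqref{eq:radialidentity4} the two ends sit at $h=\pm\pi/(4a_0^2)$. The quantity $\theta_0\equiv\beta\sqrt E\,\pi/(4a_0^2)$ is independent of $E$ because $a_0^4=\kappa_4^2E/12$; this is the flat-space scaling symmetry. Setting $\phi=\beta\sqrt E\,h_*$ gives
\[
 \e^{-\beta D_\pm}=\sqrt{U_0/E}\,\cos(\theta_0\mp\phi).
\]
Hence, at fixed $\phi$, $\partial_E(D_+-D_-)=0$ and $\partial_E(D_++D_-)=1/(\beta E)$. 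In the other direction, $\partial_\phi(D_+-D_-)=-\beta^{-1}\sin 2\theta_0/[\cos(\theta_0-\phi)\cos(\theta_0+\phi)]\neq0$.

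The Jacobian in $(E,\phi)$ is therefore triangular and nonzero on the entire regular range $\theta_0+|\phi|<\pi/2$. The scale-free shape parameter $\phi$ controls the difference $D_+-D_-$, and $E$ controls the sum. This is the reverse of the decoupling you proposed. It makes the solution isolated for all admissible data, including $D_+=D_-$, not merely generically.
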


\begin{proof}
Cut the smooth manifold into left and right halves. Their outward normals at the common neck are opposite, so their canonical scalar momenta enter with opposite signs while $D_0$ and its variation are shared. Hence the two scalar terms cancel. The gravitational momentum vanishes separately because the neck is totally geodesic. The second-order scalar equation supplies two integration constants, fixed generically by $D_+$ and $D_-$ at the chosen charge; $E$ is then determined by the first integral. Therefore no continuous neck modulus remains unless the map from integration constants to boundary data is degenerate.
\end{proof}

This proposition is narrower than the statement that every formal continuation of a half-solution is an admissible saddle. Scalar regularity, the integration contour, and the fluctuation problem must still be checked.

\subsection{Near-endpoint action and area scaling}

At fixed $(Q,D_+)$, compare a constrained $E>0$ half-geometry with the extremal endpoint. The radial identity
\begin{equation}
 \int_0^\infty\frac{\dd\tau}{a^3}
 =\int_{a_0}^{\infty}\frac{\dd a}{a\sqrt{a^4-a_0^4}}
 =\frac{\pi}{4a_0^2}
 \label{eq:radialidentity4}
\end{equation}
fixes the scaling of all near-threshold terms. With
\begin{equation}
 R_4=6\left(\frac{1-\dot a^2}{a^2}-\frac{\ddot a}{a}\right)
 =-\frac{6a_0^4}{a^6},
\end{equation}
one obtains for the two-sided bulk Einstein contribution
\begin{equation}
 \Delta S_{\rm grav}^{(4)}
 =\frac{3\pi^3}{2\kappa_4^2}a_0^2,
 \label{eq:gravcost4}
\end{equation}
up to the convention specifying whether the quoted action is for one half or the doubled geometry. The full scalar contribution is a profile-dependent quadrature. Dimensional analysis and the charge-sector equations imply
\begin{equation}
 \Delta S_4(E)=\eta_4(\beta;\mathfrak e)\frac{a_0^2}{\kappa_4^2}
 +O(a_0^4/\ell_{\rm UV}^4)
 =C_4(\beta;\mathfrak e)\sqrt{E}+\cdots,
 \label{eq:threshold4}
\end{equation}
where $\mathfrak e$ labels the neck ensemble. In the normalization of Ref.~\cite{Rey1991}, the generalized Bogomolny result is
\begin{equation}
 \Delta S_4(E)
 =\frac{\sqrt3\,\pi^2}{18}
 \left(2\beta_c^{-2}-\beta^{-2}\right)\sqrt E+O(E),
 \qquad \beta_c=\frac1{\sqrt3},
 \label{eq:Reythreshold}
\end{equation}
subject to matching conventions for $E$, charge, and half versus full action. Equation \eqref{eq:Reythreshold} is a regression target for the rebuilt calculation, not a substitute for it.

The invariant statement is
\begin{equation}
 \Delta S_4\propto\frac{A_{\rm neck}}{G_4},
 \qquad A_{\rm neck}\sim a_0^2.
 \label{eq:arealaw4}
\end{equation}
Consequently $\partial_E\Delta S_4\sim E^{-1/2}$ diverges at threshold. This does not prove that the functional integral over constrained configurations is finite, because the measure
\begin{equation}
 \dd\mu(E)=\dd E\,J_{\rm FP}(E)
 \left[\det{}'\cO_E\right]^{-1/2}
 \label{eq:measure}
\end{equation}
can alter the endpoint power. It does show that there is no interior stationary value of $E$ produced by the classical action alone.

The same endpoint also has a spectral meaning.  At $E=0$, the charge-sector equations reduce to the self-dual, BPS-saturating instanton.  In the form-field charge-sector description, the supergravity functional is not merely bounded below; it is completed into a square plus the topological charge term.  The relevant Euclidean functional takes the BPS inequality form
\begin{equation}
 S-S_{\rm BPS}={1\over2}\langle {\cal F}(\Phi),{\cal F}(\Phi)\rangle,
 \qquad {\cal F}(\Phi_0)=0,
 \label{eq:bogosquare20}
\end{equation}
where $\Phi$ denotes the Einstein--dilaton--form variables in the charge-sector supergravity description and $\Phi_0$ is the extremal configuration.  The factorization is a derived supergravity statement: it follows from the BPS completion of the charge-sector action and is not inferred from the numerical spectrum.  Let $\mathcal D_{\nu,{\rm phys}}$ denote the charge-sector fluctuation domain after gauge fixing, constraint reduction, and removal of collective-coordinate zero modes.  The first-order Hessian is the Fr\'echet derivative of the Bogomolny map on this domain,
\begin{equation}
 \mathcal Q_\nu=
 \left.D{\cal F}\right|_{\Phi_0,\mathcal D_{\nu,{\rm phys}}},
 \qquad
 {\cal F}(\Phi_0+\delta\Phi)=\mathcal Q_\nu\delta\Phi+O(\delta\Phi^2).
 \label{eq:Qlinear20}
\end{equation}
The physical second variation is therefore the adjoint square of this linearized first-order operator.

The term 'physical' is essential but does not weaken the statement.  In the mini-superspace variables $a=a_0+f$ and $D=D_0+g$, the unreduced second variation has the schematic form
\begin{equation}
 \delta^2S={1\over2}\int \dd\tau\,
 \bigl(K_{aa}\dot f^{2}+K_{DD}\dot g^{2}+2K_{a\dot D}f\dot g
      +M_{aa}f^{2}+M_{DD}g^{2}+2M_{aD}fg\bigr),
 \label{eq:unreducedMS22}
\end{equation}
while the linearized Hamiltonian constraint is
\begin{equation}
 C_{\dot a}\dot f+C_a f+C_D g+C_{\dot D}\dot g=0 .
 \label{eq:linearconstraint22}
\end{equation}
Solving the constraint gives $f=Tg$ in the physical scalar/form channel.  Substituting back yields the Euclidean Mukhanov--Sasaki, or Schur-complement, Hessian
\begin{equation}
 \cH_{\rm phys}=T^T\cH_{aa}T+T^T\cH_{aD}+\cH_{Da}T+\cH_{DD} .
 \label{eq:MSschur22}
\end{equation}
A pure conformal-factor fluctuation is a variation in the independent scale-factor direction of the unreduced gravitational action.  It is not an element of the BPS endpoint operator domain after Eq.\eqref{eq:linearconstraint22} is imposed, except for gauge or collective-coordinate directions that are removed separately.  The statement is therefore a domain statement, not an assertion about the sign of the DeWitt conformal vector: the conformal sector is eliminated before the physical BPS Hessian is defined, rather than being an unstable direction inside $\cH_\nu$.

The resulting quadratic form on the BPS endpoint in the chosen charge sector sector is
\begin{equation}
 \delta^2 S_{\rm th}
 =\langle\delta\Phi_{\rm phys},\mathcal Q_\nu^\dagger\mathcal Q_\nu
 \, \delta\Phi_{\rm phys}\rangle,
 \qquad
 \cH_\nu\equiv\mathcal Q_\nu^\dagger\mathcal Q_\nu .
 \label{eq:thresholdsquare20}
\end{equation}
Here $\nu$ labels the charge sector.  The projection to $\delta\Phi_{\rm phys}$ is the constraint-reduced supergravity fluctuation problem with gauge directions, Hamiltonian redundancies, and collective-coordinate zero modes removed, followed by the self-adjoint domain appropriate to the BPS endpoint in the chosen charge sector problem.  Appendix~\ref{app:ms-square} records the corresponding Euclidean Sasaki--Mukhanov reduction in a form that also displays its AdS$_5$ radial analogue.

\begin{proposition}[BPS endpoint in the chosen charge sector factorization]
\label{prop:thresholdsquare20}
At the extremal BPS endpoint in the chosen charge sector of the axion--dilaton/form system, the physical Hessian obtained from the Witten--Nester/BPS-completed supergravity action and the constraint-reduced Euclidean Hessian is
\begin{equation}
 \cH_\nu=\mathcal Q_\nu^\dagger\mathcal Q_\nu .
\end{equation}
Consequently $\operatorname{Spec}\cH_\nu\subset[0,\infty)$, modulo collective-coordinate zero modes, and the spectrum of this endpoint operator begins at zero.
\end{proposition}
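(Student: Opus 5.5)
The plan is to expand the BPS completion Eq.~\eqref{eq:bogosquare20} to second order around $\Phi_0$ and then transport the result through the constraint reduction of Eqs.~\eqref{eq:unreducedMS22}--\eqref{eq:MSschur22}.

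\textbf{Step 1: second variation of the square.} Write $\Phi=\Phi_0+\delta\Phi$ with $\delta\Phi$ tangent to the charge sector $\nu$. The Routhian eliminates the axion at fixed $q_\chi$, so $S_{\rm BPS}$ is a boundary or topological term fixed by the charge and the asymptotic data; hence $\delta S_{\rm BPS}=\delta^2S_{\rm BPS}=0$ on this domain. Since ${\cal F}(\Phi_0)=0$, Eq.~\eqref{eq:Qlinear20} gives
\begin{equation*}
 S-S_{\rm BPS}={1\over2}\langle \mathcal Q\delta\Phi,\mathcal Q\delta\Phi\rangle+O(\delta\Phi^3).
\end{equation*}
The cross term $\langle{\cal F}(\Phi_0),D^2{\cal F}[\delta\Phi,\delta\Phi]\rangle$ vanishes identically because ${\cal F}(\Phi_0)=0$. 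This is the mechanism that kills any second-derivative contribution of ${\cal F}$. It works only at $E=0$, which is why the statement is an endpoint theorem and not a statement about the $E>0$ wormhole.

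\textbf{Step 2: restrict to the physical domain.} The Witten--Nester completion is an identity that holds for configurations obeying the Hamiltonian constraint, since the Nester energy is a constraint-plus-boundary expression. I would therefore check that the linearized constraint Eq.~\eqref{eq:linearconstraint22} is exactly the condition under which the unreduced quadratic form Eq.~\eqref{eq:unreducedMS22} coincides with $\langle\mathcal Q\delta\Phi,\mathcal Q\delta\Phi\rangle$. Concretely, I substitute $f=Tg$ and verify that the Schur complement Eq.~\eqref{eq:MSschur22} equals $\mathcal Q_\nu^\dagger\mathcal Q_\nu$ after integrating by parts. Gauge directions lie in $\ker\mathcal Q$ up to the constraint, so $\mathcal Q$ descends to the quotient. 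The boundary terms from the integration by parts vanish by the charge-sector boundary condition at infinity and by regularity at the center, where $a=|\tau|$ and $a=0$ is a smooth point of flat space. This fixes the domain on which $\mathcal Q_\nu^\dagger$ is the genuine Hilbert adjoint and $\cH_\nu$ is its Friedrichs extension.

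\textbf{Step 3: spectrum.} For $\eta\in\mathcal D_{\nu,{\rm phys}}$ we have $\langle\eta,\cH_\nu\eta\rangle=\|\mathcal Q_\nu\eta\|^2\ge0$. Hence $\cH_\nu$ is a non-negative self-adjoint operator and $\operatorname{Spec}\cH_\nu\subset[0,\infty)$. The zero eigenspace is $\ker\mathcal Q_\nu$, namely the linearized BPS solutions in the sector. These are generated by the collective coordinates: the position, the scale and the asymptotic value of the axion--dilaton along the null geodesic Eq.~\eqref{eq:BPSrelation}. After those are removed, any residual kernel would be a normalizable BPS deformation; I would exclude it by solving the linear first-order system explicitly. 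To show that the spectrum begins at zero, I would use the asymptotically flat continuum: approximate kernel elements supported at large $\tau$ give $\inf\operatorname{Spec}=0$.

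\textbf{Main obstacle.} The hardest step is Step 2: matching the Schur-complement Hessian to $\mathcal Q^\dagger\mathcal Q$ with correct boundary terms. In particular, I must confirm that the Euclidean axion sign does not spoil the completion once the system is written in the Routhian, form-field variables. My first approach would be to carry out the computation explicitly in minisuperspace at $a=\tau$, $\dot a=1$ with $E=0$. There $K_{ij}\neq0$, so the gravitational boundary term must be retained, and I would check that it cancels against the Gibbons--Hawking variation at the charge-fixed boundary.
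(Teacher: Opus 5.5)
Your route is the paper's own. You expand the Bogomolny square \eqref{eq:bogosquare20} at a zero of the BPS map so that the $D^2\mathcal F$ cross term drops, restrict to the constraint-reduced domain in which the conformal direction has already been removed, read off $\langle\eta,\cH_\nu\eta\rangle=\|\mathcal Q_\nu\eta\|^2$, and place the bottom of the spectrum using the decaying radial potential. The paper simply takes the completion as an identity on constrained variables. Your plan to check directly that the Schur complement \eqref{eq:MSschur22} equals $\mathcal Q_\nu^\dagger\mathcal Q_\nu$ therefore goes beyond what the paper does.

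The step that does not go through as you justify it is dropping the inner boundary terms ``by regularity at the center''. The $E=0$ configuration is not regular there. On $\tau\in\mathbb R$, $a=|\tau|$ is the pinched limit of the throat, two flat regions meeting at a point. Even on one half, where the Einstein metric is flat, the scalar is singular. Along \eqref{eq:BPSrelation} the axion equation makes $e^{\beta D_0}$ harmonic, $e^{\beta D_0}=h+q/r^2$, so $D_0$ diverges logarithmically at the core. The linearized first-order operator then has background coefficients of order $1/r$ there. Worse, the collective modes you must quotient are themselves singular: the translation mode $\partial_iD_0\propto D_0'(r)\,x_i/r\sim 1/r$ is normalizable in $r^3\,\dd r$. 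Imposing regularity of $\delta D$ at $r=0$ would therefore discard exactly the modes the statement refers to.

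Positivity itself survives if you define the form $\|\mathcal Q_\nu\eta\|^2$ on fluctuations supported away from the core and take its closure. What smoothness cannot give you is the claim that this closure is the physical domain, i.e.\ that $\mathcal Q_\nu^\dagger$ is the genuine Hilbert adjoint with no core boundary term. For that you must analyse the endpoint behaviour of the radial operator at the instanton core: decide whether it is limit-point or limit-circle there, and in the latter case state the boundary condition.

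A smaller point concerns the kernel. At fixed charge and fixed asymptotic data, the scale $q$ and the asymptotic scalar values are not tangent to $\mathcal D_{\nu,{\rm phys}}$. In flat space the normalizable collective modes to remove are the translations, which sit in the $\ell=1$ channel.
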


This is the same elementary mechanism already familiar from supergravity \cite{Witten1981, Nester1981, Rey1991, Rey1992-1993}.
First-order BPS or extremality equations define the intertwining operator, and the physical bosonic normal operator is its adjoint square.  The \(E>0\) axion wormholes should therefore be read as non-BPS deformations of the \(E=0\) BPS instanton, not as disconnected saddles whose endpoint accidentally looks extremal.  The adjoint square makes the endpoint Hessian non-negative.  The theorem is deliberately not a theorem about the unreduced Euclidean gravitational Hessian of an arbitrary sub-extremal wormhole, nor about additional gauge-core blocks in larger string embeddings.  The conformal factor is not a missing direction inside the endpoint Hessian factorization; it lies outside the reduced BPS endpoint domain.  What remains separate is the complete sub-extremal Hessian with its neck boundary form, ensemble, contour, and any extra matter sectors.  The coefficient and measure in Eq.\eqref{eq:measure} still require the determinant calculation in that specified ensemble.

\section{Asymptotically $\AdS_5$ problem}
\label{sec:ads5}

\subsection{AdS radial equation}

For an $SO(5)$-invariant Euclidean metric
\begin{equation}
 \dd s_5^2=\dd\rho^2+a(\rho)^2\dd\Omega_4^2,
\end{equation}
the $D=5$, $\Lambda_5=-12/L^2$ specialization of \eqref{eq:unifiedThroat} gives the sub-extremal Type-IIB radial equation \cite{BergshoeffAdS}
\begin{equation}
 (a')^2=1+\frac{a^2}{L^2}-\frac{\widetilde q^{\,2}}{a^6},
 \qquad q^2=-\widetilde q^{\,2}<0.
 \label{eq:adsFriedmann}
\end{equation}
The neck radius $a_c$ is the unique positive root of
\begin{equation}
 1+\frac{a_c^2}{L^2}-\frac{\widetilde q^{\,2}}{a_c^6}=0,
 \qquad
 \widetilde q^{\,2}=a_c^6\left(1+\frac{a_c^2}{L^2}\right).
 \label{eq:adsneck}
\end{equation}
This equation alone invalidates a direct import of the flat-four-dimensional numerics. The angular sphere is $S^4$ rather than $S^3$, the charge term scales as $a^{-6}$ rather than $a^{-4}$, and the cosmological term enters both the asymptotic subtraction and the neck relation. The natural dimensionless control parameter is $a_c/L$, or equivalently $\widetilde q^{\,2}/L^6$.

For $a_c\ll L$, the local neck is approximately flat and $\widetilde q^{\,2}\simeq a_c^6$. A local area law then predicts
\begin{equation}
 \Delta S_5\sim\frac{a_c^3}{\kappa_5^2}
 \sim\frac{|\widetilde q|}{\kappa_5^2}
 \qquad (a_c\ll L),
 \label{eq:threshold5}
\end{equation}
not the $\sqrt E$ law of the flat four-dimensional reduction. For $a_c\gtrsim L$, holographic counterterms and the global solution contribute at the same order; the coefficient must be obtained from a fully renormalized action difference.

\subsection{Radial Sasaki--Mukhanov reduction and the AdS$_5$ square}

The BPS endpoint fluctuation statement itself does not rely on flat asymptotics.  In a five-dimensional gauged-supergravity truncation with scalars $\phi^a$ and target metric $G_{ab}$, the BPS instanton or domain-wall endpoint is governed by first-order radial equations.  The non-extremal wormhole is the corresponding off-BPS deformation.  The scalar--metric fluctuation problem must therefore be reduced before one asks for a sign.  Writing the radial metric perturbation as a trace mode $\psi$ and the scalar perturbations as $\varphi^a$, one convenient local representative of the physical radial Sasaki--Mukhanov variables is
\begin{equation}
 \mathfrak a^a
 =\varphi^a-\frac{\phi^{a\prime}}{\mathcal K}\,\psi,
 \qquad
 \mathcal K={a'\over a},
 \label{eq:adsMSvar23}
\end{equation}
in radial patches where $\mathcal K\neq0$, up to harmless convention-dependent normalizations.  At a zero of $\mathcal K$ one uses an equivalent constraint-reduced variable; the factorization concerns the reduced operator, not this particular coordinate on field space.  Lapse, shift, and the radial Hamiltonian constraint remove the pure trace direction before the physical quadratic form is defined.  The reduced quadratic action has the schematic covariant form
\begin{equation}
 \delta^2S_{5,\rm phys}
 ={1\over2}\int\dd\rho\,a^4
 \left[
  G_{ab}D_\rho\mathfrak a^aD_\rho\mathfrak a^b
  +\mathfrak a^a\mathcal M_{ab}\mathfrak a^b
  +a^{-2}\lambda_{\rm slice}\,G_{ab}\mathfrak a^a\mathfrak a^b
 \right]
 +\delta^2S_{\rm bdy,ren},
 \label{eq:adsMSaction23}
\end{equation}
where $D_\rho$ is the sigma-model covariant radial derivative and $\lambda_{\rm slice}$ is the eigenvalue of the transverse scalar harmonic or boundary momentum.  Linearizing the first-order BPS map gives a covariant operator
\begin{equation}
 (\mathcal Q_{5}\mathfrak a)^a
 =D_\rho\mathfrak a^a+\mathcal U^a{}_{b}(\rho)\mathfrak a^b,
 \label{eq:adsQ23}
\end{equation}
with $\mathcal U^a{}_{b}$ determined by the radial BPS flow and the scalar connection.  Thus the BPS endpoint normal operator in the physical domain is
\begin{equation}
 \cH_{5,\nu}
 =\mathcal Q_{5,\nu}^{\dagger}\mathcal Q_{5,\nu}
 +a^{-2}\lambda_{\rm slice}
 +\cH_{\rm bdy,ren}.
 \label{eq:adsQsquare23}
\end{equation}
The last term is shorthand for the boundary form that defines the renormalized self-adjoint problem: finite counterterm variations and admissible holographic boundary data are part of the domain of $\mathcal Q_{5,\nu}^\dagger\mathcal Q_{5,\nu}$, not a separate bulk instability.  With BPS-compatible charge-sector boundary data, the bulk factorization is the AdS BPS endpoint stability statement.  Changing to a different mixed or charge-varying boundary condition changes the self-adjoint extension and must be analyzed separately.  Negative scalar mass parameters allowed by AdS are likewise not a contradiction of Eq.\eqref{eq:adsQsquare23}; the factorization is a statement about the constraint-reduced, renormalized radial operator in the BPS endpoint domain.  The parallel derivation is given in Appendix~\ref{app:ms-square}.  This is the instanton/wormhole version of the Bogomolny stability structure encountered in BPS domain walls and supergravity vacua \cite{Rey1992-1993}.

\subsection{Regular metric does not imply a regular Type-IIB saddle}

The Einstein-frame curvature is finite at $a=a_c$. Nevertheless, in the single axion--dilaton truncation the scalar solution takes the sub-extremal form
\begin{equation}
 \e^{\phi}=\left|\frac{q_-}{\widetilde q}
 \sin\bigl(\widetilde q H(\rho)\bigr)\right|,
 \qquad
 \chi=\frac1{q_-}
 \left[\widetilde q\cot\bigl(\widetilde qH(\rho)\bigr)-q_3\right],
 \qquad \widetilde q^2 := - q^2,
 \label{eq:adsScalars}
\end{equation}
where $H$ is harmonic on \eqref{eq:adsFriedmann}. Its total range is
\begin{equation}
 \mathcal R_H
 =2\sqrt{24}\,\widetilde q
 \int_{a_c}^{\infty}
 \frac{\dd a}{a^4\sqrt{1+a^2/L^2-\widetilde q^{\,2}/a^6}}.
 \label{eq:Hrange}
\end{equation}
For the Type-IIB coupling, the image of $\widetilde qH$ cannot be kept inside a single interval between consecutive zeros of the sine. The metric throat is smooth, but $\phi$ or $\chi$ becomes singular. This was established in the non-extremal D-instanton analysis \cite{BergshoeffAdS}. Therefore:

\begin{proposition}[Single-field $\AdS_5$ obstruction]
The $SO(5)$-invariant sub-extremal solution of the single Type-IIB axion--dilaton truncation has a two-ended, smooth Einstein-frame metric but is not a globally regular scalar saddle. It cannot by itself serve as the regular holographic wormhole whose semiclassical contribution is to be tested.
\end{proposition}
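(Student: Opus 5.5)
The plan is to reduce the statement to an explicit range estimate for the harmonic function $H$ and then compare that range with the spacing of zeros of $\sin(\widetilde q H)$ in \eqref{eq:adsScalars}.

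\emph{Step 1: the metric.} The Einstein-frame metric is smooth and two-ended because $a_c$ is a simple positive root of the right-hand side of \eqref{eq:adsFriedmann}. Simplicity follows from \eqref{eq:adsneck}: the derivative of $1+a^2/L^2-\widetilde q^{\,2}/a^6$ at $a_c$ equals $2a_c/L^2+6\widetilde q^{\,2}/a_c^7>0$. Hence $a(\rho)=a_c+O(\rho^2)$ near $\rho=0$, so $\dot a(0)=0$ and the neck is totally geodesic, exactly as in \eqref{eq:totallygeodesic}. The solution continues symmetrically to two asymptotically Euclidean-AdS ends.

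\emph{Step 2: the scalars.} The scalar solution \eqref{eq:adsScalars} depends on the radius only through $\widetilde q H(\rho)$, where $H$ is harmonic on the background and monotone in $\rho$. Scalar regularity requires the image of $\widetilde q H$ over the whole two-ended manifold to avoid the zeros $k\pi$ of $\sin$, since $\e^{\phi}\to0$ and $\chi\to\infty$ there. That image is an interval of length $\mathcal R_H$ given by \eqref{eq:Hrange}. So a globally regular saddle requires $\mathcal R_H<\pi$, and it suffices to show $\mathcal R_H\ge\pi$ for every $\widetilde q$.

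\emph{Step 3: the key estimate.} Substitute $a=a_c x$ and use \eqref{eq:adsneck} to write
\begin{equation*}
\mathcal R_H=2\sqrt{24}\,\sqrt{1+s}\int_1^\infty\frac{x^{-1}\,\dd x}{\sqrt{x^6-1+s\,(x^8-x^6)}},\qquad s=\frac{a_c^2}{L^2}.
\end{equation*}
Then show that this function of $s$ is bounded below by $\pi$ on $(0,\infty)$.

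At $s=0$ the integral is elementary: setting $u=x^6$ gives $\int_1^\infty \dd x/(x\sqrt{x^6-1})=\pi/6$. Hence $\mathcal R_H(0)=2\sqrt{24}\,\pi/6\approx 1.63\,\pi>\pi$. This limit is the flat-neck regime in which the $\sqrt{24}$ normalization is the Type-IIB coupling $\beta_5=1$, in contrast with the flat-four-dimensional $\beta$-dependent criterion.

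\emph{Main obstacle: all $s>0$.} The hardest step is showing the lower bound $\mathcal R_H\ge\pi$ holds for every $s$, not just near $s=0$. The prefactor $\sqrt{1+s}$ competes with the growth of the $s(x^8-x^6)$ term. I would first check the large-$s$ limit. There $x^8-x^6$ dominates, and the integral scales as $s^{-1/2}\int_1^\infty \dd x/(x^4\sqrt{1-x^{-2}})$, which is finite. The product therefore tends to the constant $2\sqrt{24}\int_1^\infty \dd x/(x^4\sqrt{1-x^{-2}})=2\sqrt{24}\cdot\tfrac{2}{3}\approx 6.5>\pi$.

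For intermediate $s$ I would try to prove monotonicity of $\mathcal R_H(s)$ by differentiating under the integral. Failing that, I would give a rigorous numerical bound using the elementary inequality $x^8-x^6\le \tfrac{4}{3}(x^6-1)x^2$. This bounds the integrand from below by an integral expressible in Beta functions.

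\emph{Conclusion.} Once $\mathcal R_H>\pi$ is established, the interval $\widetilde q H(\mathbb R)$ contains a zero of the sine in its interior. At that zero $\phi$ or $\chi$ diverges at finite $\rho$. This is the result of \cite{BergshoeffAdS}, so it can be cited as a check. Such a configuration is not a regular critical point of \eqref{eq:DEaction}, so the truncation cannot by itself serve as the regular holographic wormhole.
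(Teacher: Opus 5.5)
Your overall strategy is the paper's. The paper only asserts that, for the Type-IIB coupling, the range \eqref{eq:Hrange} of $\widetilde q H$ exceeds the spacing $\pi$ between zeros of the sine, and cites the non-extremal D-instanton analysis. Your Steps 1, 2 and conclusion are fine.

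\textbf{The gap is in Step 3}, the only place you go beyond the paper. Rescaling $a=a_cx$ with $\widetilde q^{\,2}=a_c^6(1+s)$ gives $x^4\sqrt{1+sx^2-(1+s)x^{-6}}=x\sqrt{x^6-1+s(x^8-1)}$. So the radicand is $(1+s)(x^6-1)+s(x^8-x^6)$; you dropped the factor $(1+s)$ on $x^6-1$.
\begin{itemize}
\item Your radicand is too small by $s(x^6-1)$, so your integrand is pointwise \emph{larger} than the true one.
\item Hence any lower bound you extract from it is not a lower bound on the actual $\mathcal R_H$. This applies to the large-$s$ value $2\sqrt{24}\cdot\tfrac23\approx 6.5$ and to the Beta-function bound built from $x^8-x^6\le\tfrac43(x^6-1)x^2$.
\item The true large-$s$ limit is $2\sqrt{24}\int_1^\infty \dd x/(x\sqrt{x^8-1})=2\sqrt{24}\,\pi/8=\sqrt{3/2}\,\pi\approx1.22\,\pi$. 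The margin above $\pi$ is much thinner than you estimated.
\item The intermediate-$s$ regime, which you identify as the main obstacle, is left unproved.
\end{itemize}

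With the correct radicand the obstacle disappears in one line. Absorb the prefactor $\sqrt{1+s}$ into the square root and set $t=s/(1+s)\in[0,1)$. Since $x^6-1\le x^8-1$ for $x\ge1$,
\begin{equation*}
\mathcal R_H=2\sqrt{24}\int_1^\infty\frac{\dd x}{x\sqrt{(1-t)(x^6-1)+t(x^8-1)}}\;\ge\;2\sqrt{24}\int_1^\infty\frac{\dd x}{x\sqrt{x^8-1}}=\sqrt{3/2}\,\pi>\pi .
\end{equation*}
The same expression shows that $\mathcal R_H$ decreases monotonically in $s$, from $\tfrac{2\sqrt6}{3}\pi$ at $s=0$ to $\sqrt{3/2}\,\pi$ as $s\to\infty$. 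Differentiating under the integral confirms this: the numerator becomes $x^6-x^8<0$.

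Note also that your criterion $\mathcal R_H\ge\pi$ is not quite sufficient. An open interval of length exactly $\pi$ can sit between consecutive zeros, so you need the strict bound, which this estimate supplies. With this replacement your argument is complete and gives a self-contained version of the bound the paper only cites.
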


Regular axionic wormholes in Type-IIB AdS compactifications can arise after enlarging the scalar manifold, for example, in orbifold compactifications with several axion--saxion directions \cite{HertogTrigianteVanRiet}. That construction is not equivalent to the single D-instanton trajectory. We therefore separate the single axion--dilaton D-instanton sector, where the sub-extremal $\AdS_5$ metric throat is scalar-singular, from multi-scalar Type-IIB compactifications, where regular geodesics can evade the single-field regularity bound. The endpoint factorization Eq.\eqref{eq:adsQsquare23} is the local BPS statement in either case; the existence of a globally regular non-BPS wormhole, its renormalized action, and its complete fluctuation spectrum must still be computed in the actual scalar manifold and boundary ensemble.

\subsection{Renormalized action and boundary data}

The correct $\AdS_5$ functional is
\begin{equation}
 S_{\ren}=S_{\bulk}+S_{\rm GH}+S_{\rm ct}+S_{\rm axion/charge},
 \label{eq:renaction}
\end{equation}
where the last term specifies the axion ensemble. A comparison between a connected two-ended geometry and disconnected one-ended configurations is meaningful only when the induced conformal structures at the two boundaries, the axion or charge ensemble at each boundary, the scalar sources and normalizable modes, the holographic counterterm scheme, and the ten-dimensional flux and charge quantization conditions have all been matched. No flat-space action coefficient will be inserted into this comparison. The $\AdS_5$ calculation is a separate holographic problem with its own boundary data and renormalization scheme.

\section{Charge sectors, duality, and the Euclidean $\sqrt{-1}$}
\label{sec:duality}

Three operations are frequently collapsed into the assertion that the axion and form field descriptions are ``the same saddle.'' They must be kept distinct.

\subsection{Projection to a charge sector}

Let $\vartheta$ be a compact boundary axion and $n\in\Z$ its conjugate flux. Passing a partition function $Z$ from fixed $\vartheta$ to fixed $n$ involves a Fourier transform,
\begin{equation}
 Z_n=\int_0^{2\pi}\frac{\dd\vartheta}{2\pi}
 \e^{-in\vartheta}Z(\vartheta).
 \label{eq:Fourier}
\end{equation}
In a saddle treatment, the phase in Eq.~\eqref{eq:Fourier} modifies the boundary variational problem. The corresponding Routhian is a Legendre-transformed functional. Its stress tensor can have the sign required to support a throat even when the naive real-scalar kinetic term cannot.

\subsection{Local scalar-form duality}

A first-order parent functional schematically contains
\begin{equation}
 S[H,\chi]=\frac12\int f(\phi)H\wedge\star H
 +\sqrt{-1} \int\chi\,\dd H+S_{\bdy}.
 \label{eq:parent}
\end{equation}
Stationarity in $H$ gives a relation of the form
\begin{equation}
 H = - \sqrt{-1} f(\phi)^{-1}\star\dd\chi.
 \label{eq:iduality}
\end{equation}
The factor $\sqrt{-1}$ is tied to Euclidean continuation and to the Fourier factor imposing the Bianchi identity or charge projection. Whether it is represented as an imaginary scalar profile, a complex integration cycle, or a real dual form-flux depends on which variables define the saddle-point problem.

\subsection{Saddle space versus field-configuration space}

Equation \eqref{eq:iduality} does not imply a one-to-one map between real classical saddles in the two variable choices. Quantum equivalence of the duality relation require summing over flux sectors and performing Poisson or Fourier resummation (as recapitulated in \cite{WittenDuality}). Accordingly, this paper makes only the following limited statement:

\begin{quote}
The form-field charge-sector saddle and the scalar representation are descriptions of the same quantum sector after the appropriate transform, but they need not be identical real stationary points of two naively written Euclidean actions.
\end{quote}

The detailed taxonomy of $\sqrt{-1}$ factors---including Wick rotation, boundary Fourier transform, first-order dualization, and complex Lefschetz thimbles---will be analyzed in a separate publication. Nothing in the constrained-instanton argument depends on prematurely identifying these operations.

\section{BPS endpoint Hessian, domain, and numerical checks}
\label{sec:numerics}

The fluctuation analysis has two layers that should not be merged.  The first is the BPS endpoint in the chosen charge sector factorization of Proposition~\ref{prop:thresholdsquare20}.  At the BPS endpoint the scalar--gravity perturbation has already been reduced to the physical charge-sector domain.  Its quadratic action is governed by
\begin{equation}
 \cH_\nu=\mathcal Q_\nu^\dagger\mathcal Q_\nu .
\end{equation}
This is the Witten--Nester/BPS Hessian on the Mukhanov--Sasaki domain, not the unreduced conformal-factor Hessian.  It shows that the BPS endpoint spectrum begins at zero.  The second layer is the full Euclidean Hessian of a neck-cut half-geometry or of a complete sub-extremal two-ended solution, including gravity, constraints, boundary terms, possible gauge-core blocks, and the chosen contour.  A negative-mode count belongs to the second problem only after the saddle, ensemble, gauge, and function space have been fixed.

In the flat four-dimensional BPS endpoint problem, the square structure can be exhibited in radial variables.  Decomposing the physical dilaton/form fluctuation into $S^3$ harmonics and passing to a canonically normalized variable $\psi=r^{3/2}\delta D$, the BPS endpoint gives the representative channel
\begin{equation}
 -\psi''+V_{\rm eff}(r)\psi=\lambda\psi,
 \qquad
 V_{\rm eff}(r)=\frac{(\ell+\tfrac12)(\ell+\tfrac32)}{r^2}
 +\beta^2(D_0')^2,
 \label{eq:oldhessian20}
\end{equation}
which is the radial image of the supergravity factorization $\mathcal Q_{\nu,\ell}^\dagger\mathcal Q_{\nu,\ell}$.  Since the potential approaches zero at infinity and the operator is an adjoint square derived from the BPS completion, the continuum begins at $\lambda=0$ and there are no negative bound states in this BPS endpoint sector.  On a finite Dirichlet box the lowest mode should therefore approach zero as the finite-box bottom of the continuum.  The earlier numerical check gives
\begin{center}
\begin{tabular}{c|c|c}
$R_{\max}$ & $\lambda_{\min}(\ell=0)$ & ratio \\ \hline
20  & $3.670\times10^{-2}$ & --- \\
40  & $9.167\times10^{-3}$ & 4.00 \\
80  & $2.293\times10^{-3}$ & 4.00 \\
160 & $5.733\times10^{-4}$ & 4.00 \\
320 & $1.434\times10^{-4}$ & 4.00
\end{tabular}
\end{center}
The ratio under $R_{\max}\to2R_{\max}$ confirms $\lambda_{\min}\propto R_{\max}^{-2}$, precisely the finite-box approach to the continuum threshold at zero.  The numerical table is not the proof of positivity; it checks that the discretized radial operator is reproducing the already-derived $\mathcal Q^\dagger\mathcal Q$ BPS endpoint operator.

For sub-extremal $E>0$ necks the situation is different.  The solutions have the action excess Eq.\eqref{eq:threshold4}, but the sign of $d^2\Delta S/dE^2$ along the one-parameter curve is not by itself a Hessian eigenvalue.  Conversely, the matter-sector Hessian can be positive while a gravitational or constrained-neck direction remains outside the tested matter block.  The matter-sector tests on the non-BPS wormhole background found positive lowest eigenvalues over representative neck sizes:
\begin{center}
\begin{tabular}{c|c|c|c}
$E$ & $a_{\min}$ & $\Delta S$ & $\lambda^{({\rm matter})}_{\min}$ \\ \hline
0.01 & 0.170 & 0.475  & 0.016 \\
0.10 & 0.302 & 1.502  & 0.004 \\
0.50 & 0.452 & 3.358  & 0.004 \\
1.00 & 0.537 & 4.749  & 0.013 \\
5.00 & 0.803 & 10.618 & 0.005
\end{tabular}
\end{center}
These numbers are useful regression data, but they are subordinate to the analytic separation of problems.  They do not prove full wormhole stability, and they do not replace the charge-sector factorization at the extremal endpoint. They indicate that the BPS endpoint sector has a nonnegative spectrum enforced by Eq.\eqref{eq:thresholdsquare20}, while the complete sub-extremal Hessian still requires the gravitational constraint, the neck boundary form, the collective-coordinate measure, and the contour.

The ensemble dependence is concrete.  The constrained problem fixes the chosen neck datum, whereas the momentum-fixed problem fixes $\delta\Pi_D(0)=0$ and includes the second variation of the Legendre term in Eq.\eqref{eq:neckLegendre}.  These choices lead to different self-adjoint domains.  Changing from fixed $D_0$ to fixed $\Pi_D$ can remove, retain, or create a negative eigenvalue through the boundary form.  No variational theorem makes the Neumann spectrum non-negative.  Recent stability analyses emphasize the same point: axionic boundary conditions can alter the sign and admissibility of Euclidean modes \cite{RubakovShvedov,HertogTruijenVanRiet,HertogMaenaut,MarolfStability}.  The adjoint-square statement is therefore the derived BPS endpoint in the chosen charge sector theorem.  It also explains why conformal-factor objections have to be aimed at the unreduced gravitational functional, not at the physical BPS endpoint operator: the latter is already the Schur-complement Hessian on the constraint surface.

The $\AdS_5$ calculation must be treated independently.  It uses $S^4$ harmonics and the radial equation Eq.\eqref{eq:adsFriedmann}, includes the active scalars of a regular Type-IIB truncation rather than only the universal axion--dilaton pair, and implements holographic boundary conditions that distinguish sources from normalizable data.  The quadratic variations of $S_{\rm GH}$, $S_{\rm ct}$, and the charge-fixing term must be included.  For the scalar-singular single-field solution class, a conventional Sturm--Liouville problem is not even well-defined until the singular endpoint and its admissible completion are specified.

For reproducibility, the two numerical problems should use different dimensionless variables:
\begin{align}
 \text{flat }4d:&\quad x=a/a_0,
 \qquad \dot a^2=1-x^{-4};\\
 \AdS_5:&\quad x=a/a_c,\quad \lambda=a_c/L,
 \qquad (a')^2=1+\lambda^2x^2-(1+\lambda^2)x^{-6}.
\end{align}
The minimum regression standard is therefore the following.  The flat code must reproduce the derived BPS endpoint factorization Eq.\eqref{eq:thresholdsquare20}, the radial identity Eq.\eqref{eq:radialidentity4}, the coefficient in Eq.\eqref{eq:gravcost4}, and the finite-box $R_{\max}^{-2}$ approach to the zero continuum threshold.  The AdS code must recover the $\lambda\to0$ local-flat limit of the geometry and verify the scalar-range obstruction Eq.\eqref{eq:Hrange} for the single Type-IIB coupling.  Both computations must demonstrate convergence under radial cutoff, grid refinement, and variation of the neck regulator, and they must report spectra separately for each ensemble and spacetime.

\section{Unitary completion, $\alpha$-sectors, and long-distance throat terms}
\label{sec:unitaritysewing}

The preceding sections concern classical charge-sector saddles and their endpoint Hessian. I now discuss a related but separate question.  If a small throat wormhole is replaced at long distances by a two-end insertion, what constrains the relative weights of the different parent-universe placements?  This section does not derive the Type-IIB coefficient matrix \(C^{ij}\).  That derivation requires the action, determinant, charge projection, zero-mode insertions, collective-coordinate measure, and contour. My claim is conditional: once a controlled calculation gives \(C^{ij}\), the different placements of the two end insertions are parts of a single expression of displaced quantum oscillator. 

Rubakov's lower-dimensional models fix the scope of the displaced oscillator \cite{NirovRubakov,RubakovBaby,RubakovLongRange}.  In those models, baby-universe splitting in a two-dimensional parent universe can be represented in an enlarged string theory description (not to be confused with Type IIB string theory I presently discuss) as emission of a light string state by a heavy macroscopic string.  The emitted object is not an ordinary particle moving inside the parent two-dimensional spacetime. It represents a separate component in a larger Hilbert space. Thus Rubakov models motivate the question of unitary completion, though he does not compute the Type-IIB axion--dilaton coefficient matrix. 

The general trace-preserving quantum channel is described b y the Feynman--Vernon influence functional \cite{FeynmanVernon1963}.  For a source \(J\) coupled to an eliminated sector \(E\), the influence functional is 
\begin{equation}
 {\cal F}[J_+,J_-]
 =\mathrm{Tr}_{E}\!\left(
 U_E[J_+] \, \rho_E \, U_E[J_-]^\dagger
 \right) .
 \label{eq:FVfunctional30}
\end{equation}
Unitarity gives
\begin{equation}
 {\cal F}[J,J]=1 .
 \label{eq:FVunitarity30}
\end{equation}
This is the trace preservation. In a linear Gaussian sector, Eq.~\eqref{eq:FVunitarity30} ties the same-history and cross-history terms into one package. This also holds in QED for soft photons forming a quantum environment to hard particle system~\cite{ReySoftOQS, ReySoftZX}. 

Bloch--Nordsieck and Pauli--Fierz systems are standard examples of this linear-source structure \cite{BlochNordsieck1937,PauliFierz1938}.  With the convention
\begin{equation}
 H[J]=\sum_\lambda \omega_\lambda b_\lambda^\dagger b_\lambda
 +\sum_\lambda\left(J_\lambda b_\lambda^\dagger+J_\lambda^*b_\lambda\right)+E_0[J],
\end{equation}
completion of the square gives
\begin{equation}
 c_\lambda=b_\lambda+{J_\lambda\over\omega_\lambda},
 \qquad
 H[J]=\sum_\lambda\omega_\lambda c_\lambda^\dagger c_\lambda
 +E_0[J]-\sum_\lambda{|J_\lambda|^2\over\omega_\lambda} .
\end{equation}
This oscillator algebra supplies a useful insight to the wormhole calculus. Generic soft photons are radiative modes with momentum, angle, polarization, gauge constraints, and memory data. In the strict zero momentum limit, these oscillators become a frozen, commuting source labels, as in Coleman's calculus. 

Coleman's calculus supplies the topology-specific version \cite{Coleman1988,Coleman1988b,HebeckerReview}.  The baby-universe Fock operators obey
\begin{equation}
 [a_i,a_j]=[a_i^\dagger,a_j^\dagger]=0,
 \qquad
 [a_i,a_j^\dagger]=\delta_{ij} .
\end{equation}
The operators that multiply local terms in the parent Hamiltonian are commuting Hermitian combinations, written schematically as
\begin{equation}
 A_i=a_i+a_i^\dagger,
 \qquad
 [A_i,A_j]=0 .
\end{equation}
They can be diagonalized,
\begin{equation}
 A_i|\alpha\rangle=\alpha_i|\alpha\rangle .
\end{equation}
At fixed \(\alpha\), the parent Hamiltonian is ordinary:
\begin{equation}
 {\cal H}_\alpha={\cal H}_0+\sum_i\alpha_i{\cal H}_i .
\end{equation}
Tracing over a diagonal distribution of \(\alpha\)'s gives a fixed-Hamiltonian average,
\begin{equation}
 \Phi(\rho)=\int d\alpha\,P(\alpha)\,U_\alpha\rho U_\alpha^\dagger .
\end{equation}
This is not a Haar-random circuit, a pseudorandom circuit, or a Lindblad bath.  It is an average over rigid coupling constants.

The Euclidean source algebra is direct.  Let \(I_i^{(A)}\) be the \(i\)-th end insertion on the component \(M_A\).  The total source coupled to the eliminated \(\alpha\)-sector is
\begin{equation}
 I_i=\sum_A I_i^{(A)} .
\end{equation}
For a Gaussian distribution of \(\alpha\)'s,
\begin{equation}
 \int d\alpha\,P(\alpha)\exp(\alpha_i I_i)
 =\exp\!\left({1\over2}I_iC^{ij}I_j\right) .
\end{equation}
Expanding the total source gives
\begin{equation}
 {1\over2}I_iC^{ij}I_j
 =\sum_{A<B}I_i^{(A)}C^{ij}I_j^{(B)}
 +{1\over2}\sum_A I_i^{(A)}C^{ij}I_j^{(A)} .
 \label{eq:diagonalplacement30}
\end{equation}
The first term has the two end insertions on different components.  The second has both end insertions on the same component.  They are not independent assumptions.  They are the mixed and equal-component terms produced by one coefficient matrix.  Removing the equal-component term while retaining the mixed term requires a charge projection, zero-mode saturation rule, boundary condition, supersymmetry constraint, or contour cancellation that acts on the same coefficient matrix.

The role of unitarity is evident. Feynman--Vernon, Bloch--Nordsieck, and Pauli--Fierz systems show how a linearly sourced eliminated sector must be treated as a complete trace-preserving quantum channel.  Coleman gives the topology-specific frozen-sector realization. The Type IIB string theory is still required for the calculation of \(C^{ij}\), including the charge lattice, fermion zero modes, collective coordinates, determinant, and contour.

I stress that the same-universe term in Eq.~\eqref{eq:diagonalplacement30} is not a local counterterm.  It is a bilocal term on one asymptotic parent universe.  At short separation, it can mix with local counterterms, but its long-distance origin is different: both end insertions come from the same small throat.  A Wilsonian subtraction may renormalize local operators; it does not remove the finite separated insertion unless the same microscopic calculation supplies a cancellation.

The discussion above does not claim that {\sl every} same-universe term survives in {\sl every} Type-IIB compactification.  Duality projections, charge conservation, supersymmetric zero modes, or boundary conditions can remove a particular operator.  The claim is structural.  Once a given coefficient matrix is derived, its mixed and equal-component placements are tied by the same source algebra.

\section{Relation to earlier and recent works}
\label{sec:relation}

The flat-space part of the paper isolates the charge-sector content of the older axionic-string and wormhole analysis \cite{Rey1991,Rey1989a,Rey1989b}. The main point is the ordering of the problem: the \(E=0\) BPS instanton comes first, and the \(E>0\) wormhole is a non-BPS deformation of it.  The physical Hessian at \(E=0\) inherits the Witten--Nester/Bogomolny adjoint-square form familiar from supergravity stability arguments \cite{Witten1981,Nester1981,Rey1992-1993}.

Recent discussions of axion wormholes emphasize boundary conditions, negative modes, and duality projections \cite{HertogTrigianteVanRiet,HertogTruijenVanRiet,HertogMaenaut,MarolfStability,AOP}.  The present paper separates two issues that are often conflated.  The adjoint-square Hessian is a statement about the BPS endpoint in the chosen charge sector after constraint reduction.  The complete \(E>0\) non-BPS Euclidean Hessian is a different operator.  It depends on the ensemble, the boundary form, the contour, and any additional fields in the compactification.

Coleman's baby-universe calculus is used here only for the topology-specific source algebra \cite{Coleman1988,Coleman1988b,HebeckerReview}.  Rubakov's lower-dimensional models~\cite{NirovRubakov,RubakovBaby,RubakovLongRange} and the Feynman--Vernon viewpoint explain why an eliminated sector should be treated as a complete trace-preserving object \cite{FeynmanVernon1963}.  Bloch--Nordsieck and Pauli--Fierz systems \cite{BlochNordsieck1937,PauliFierz1938} provide exactly solvable linear-source quantum oscillator examples, whose strict zero momentum limit is Coleman-like. 

\section{Conclusions}

The results of the paper have the following hierarchy.  The \(E=0\) solution is the BPS instanton.  For \(E>0\), the same charge-sector equations give non-BPS wormholes.  At \(E=0\), after the Hamiltonian constraint, gauge fixing, charge-sector boundary conditions, and zero-mode quotienting are imposed, the physical Hessian factorizes:
\begin{equation}
 \cH_\nu=\mathcal Q_\nu^\dagger\mathcal Q_\nu .
\end{equation}
This is the main endpoint theorem of the paper.

The theorem is not a claim of full stability for the \(E>0\) Euclidean wormhole.  The non-BPS Hessian has to be analyzed with its own ensemble, boundary form, contour, and matter content.  In \(\AdS_5\), the same endpoint logic is expressed through radial Sasaki--Mukhanov variables and a renormalized self-adjoint boundary problem.  The single axion--dilaton AdS solution has a smooth Einstein-frame throat but a scalar singularity; regular AdS wormholes require the actual scalar manifold and boundary data of the compactification.

The long-distance throat term is a separate issue.  If a small throat is represented by two end insertions with coefficient matrix \(C^{ij}\), then mixed-component and same-component terms come from the same quadratic expression.  Keeping one and deleting the other requires a real mechanism, such as a charge projection, zero-mode saturation rule, boundary condition, supersymmetry constraint, or contour cancellation.  Coleman supplies the topology-specific language for this statement; Feynman--Vernon and soft-radiation examples supply only the general lesson that an eliminated linear sector must be treated as one trace-preserving package.

\section*{Acknowledgments}
I acknowledge stimulating discussions with participants at 'Quantum PCP, Area Laws and Quantum Gravity' workshops held at the Institute for Pure \& Applied Mathematics (IPAM, USA) and the Simons Institute for the Theory of Computing (SIfTC, USA). This work was supported in part by the U.S. National Science Foundation and the Simons Foundation, and by the National Research Foundation of Korea (NRF) (RS-2021-NR060112) and by funds from Kwangwoon University. 

\appendix
\section{Constraint-reduced Sasaki--Mukhanov operator and the BPS Hessian factorization}
\label{app:ms-square}

\subsection{Four-dimensional charge-sector endpoint}
The factorization used in the main text concerns the reduced physical Hessian.  The reduction is stated explicitly to locate the conformal-factor objection in the correct space.  In the $O(4)$-invariant radial problem let
\begin{equation}
 a=a_0+f,
 \qquad
 D=D_0+g,
\end{equation}
where $D$ denotes the scalar variable dual to the fixed form charge, or the corresponding dilaton--axion coordinate in the form-field Routhian description.  Before imposing the constraint, the quadratic form has the block structure
\begin{equation}
 \delta^2S={1\over2}
 \begin{pmatrix} f & g \end{pmatrix}
 \begin{pmatrix}
  \cH_{aa} & \cH_{aD}\\
  \cH_{Da} & \cH_{DD}
 \end{pmatrix}
 \begin{pmatrix} f\\ g \end{pmatrix},
 \label{eq:appunreducedblock24}
\end{equation}
with the inner product including the radial measure and the angular degeneracy label.  Equivalently, in local notation it contains terms of the form
\begin{equation}
 \int\dd\tau\,
 \bigl(K_{aa}\dot f^2+K_{DD}\dot g^2+2K_{a\dot D}f\dot g
       +M_{aa}f^2+M_{DD}g^2+2M_{aD}fg\bigr).
\end{equation}
The lapse variation gives the linearized Hamiltonian constraint
\begin{equation}
 C_{\dot a}\dot f+C_a f+C_Dg+C_{\dot D}\dot g=0.
 \label{eq:appconstraint24}
\end{equation}
This equation is the Euclidean radial counterpart of the scalar constraint in the Lorentzian Sasaki--Mukhanov construction.  The indefinite block \eqref{eq:appunreducedblock24} is not the operator whose sign is being tested.  The physical scalar/form channel is obtained by solving \eqref{eq:appconstraint24}, together with the chosen gauge and charge-sector boundary data.  We write this solution abstractly as
\begin{equation}
 f=Tg,
 \label{eq:apptransfer24}
\end{equation}
where $T$ is the constraint transfer operator.  It may be local in a convenient gauge or nonlocal as a Green operator; its detailed form is not part of the positivity assumption.  Substitution gives the Schur-complement, or Euclidean Sasaki--Mukhanov, Hessian
\begin{equation}
 \cH_{\rm MS}
 =T^T\cH_{aa}T+T^T\cH_{aD}+\cH_{Da}T+\cH_{DD}.
 \label{eq:appMSschur24}
\end{equation}
This is the operator whose sign controls the BPS endpoint scalar/form perturbation.  A pure conformal fluctuation is a variation in the independent scale-factor direction of the unreduced action.  Except for gauge or collective-coordinate components that are quotiented separately, such a vector is not in the domain defined by \eqref{eq:appconstraint24}.  The reduced endpoint Hessian factorization therefore avoids the raw DeWitt conformal problem by construction: the conformal vector has been eliminated by the constraint and gauge projection before the operator whose sign is studied is formed.

At the BPS endpoint in the chosen charge sector the Routhian supergravity action is Bogomolny-completed,
\begin{equation}
 S-S_{\rm BPS}
 ={1\over2}\langle\mathcal B(\Phi),\mathcal B(\Phi)\rangle,
 \qquad
 \mathcal B(\Phi_0)=0,
 \label{eq:appBogo24}
\end{equation}
with $\Phi$ denoting the constrained Einstein--dilaton--form variables.  If $\mathcal D_{\nu,{\rm phys}}$ denotes the reduced charge-sector domain, the linearized operator is
\begin{equation}
 \cQ_\nu=\left.D\mathcal B\right|_{\Phi_0,\mathcal D_{\nu,{\rm phys}}},
 \qquad
 \mathcal B(\Phi_0+\delta\Phi_{\rm phys})
 =\cQ_\nu\delta\Phi_{\rm phys}+O(\delta\Phi_{\rm phys}^2).
\end{equation}
Therefore the reduced BPS endpoint operator is
\begin{equation}
 \cH_\nu=\cQ_\nu^\dagger\cQ_\nu,
 \label{eq:appQsquare24}
\end{equation}
where the adjoint is taken with the charge-sector radial inner product and the self-adjoint domain inherited from the constraint and boundary form.  The finite-box eigenvalue problem used in numerical checks is a discretization of \eqref{eq:appMSschur24} at the endpoint; it tests the discretization of the factorized operator and does not supply the derivation.

\subsection{AdS$_5$ radial version}
The same mechanism is present in the five-dimensional gauged-supergravity BPS endpoint problem.  For a radial metric
\begin{equation}
 \dd s_5^2=\dd\rho^2+a(\rho)^2\dd s^2_{\Sigma_4},
 \qquad
 \mathcal K={a'\over a},
\end{equation}
with scalar fields $\phi^a$ and target metric $G_{ab}$, the scalar perturbations $\varphi^a$ mix with the scalar trace perturbation $\psi$ of the induced metric.  The radial Hamiltonian and momentum constraints remove the lapse, shift, and pure trace direction.  The physical variables may be represented locally by the radial Sasaki--Mukhanov combinations
\begin{equation}
 \mathfrak a^a
 =\varphi^a-{\phi^{a\prime}\over\mathcal K}\psi,
 \label{eq:appadsMSvar24}
\end{equation}
in patches with $\mathcal K\neq0$, up to convention-dependent normalization of $\psi$.  At zeros of $\mathcal K$ one changes to an equivalent constraint-reduced coordinate.  The reduced quadratic action has the covariant form
\begin{equation}
 \delta^2S_{5,{\rm phys}}
 ={1\over2}\int\dd\rho\,a^4
 \left[
  G_{ab}D_\rho\mathfrak a^aD_\rho\mathfrak a^b
  +\mathfrak a^a\mathcal M_{ab}\mathfrak a^b
  +a^{-2}\lambda_{\rm slice}G_{ab}\mathfrak a^a\mathfrak a^b
 \right]
 +\delta^2S_{{\rm bdy},\ren}.
 \label{eq:appadsMSaction24}
\end{equation}
Here $D_\rho$ is the sigma-model covariant derivative and $\lambda_{\rm slice}$ is the eigenvalue of the transverse harmonic or boundary momentum.  For a BPS or fake-BPS radial flow the first-order equations define a linearized map
\begin{equation}
 (\cQ_5\mathfrak a)^a
 =D_\rho\mathfrak a^a+\mathcal U^a{}_b(\rho)\mathfrak a^b,
 \label{eq:appadsQ24}
\end{equation}
where $\mathcal U^a{}_b$ is the covariant Hessian of the first-order flow, including the warp-factor contribution and target-space connection.  The renormalized physical BPS endpoint operator is then
\begin{equation}
 \cH_{5,\nu}
 =\cQ_{5,\nu}^\dagger\cQ_{5,\nu}
 +a^{-2}\lambda_{\rm slice}
 +\cH_{{\rm bdy},\ren}.
 \label{eq:appadsQsquare24}
\end{equation}
The last term is part of the self-adjoint AdS problem: it encodes the finite counterterm variation and the allowed boundary condition.  Positivity is therefore a statement about the BPS-compatible renormalized domain, not about arbitrary mixed boundary conditions.  It is not an additional bulk channel competing with the factorized operator.  As in four dimensions, the theorem is a BPS endpoint statement.  It says that the BPS instanton/domain-wall endpoint has a factorized physical normal operator.  The \(E>0\) wormhole, if globally regular, has to be analyzed as a non-BPS deformation with its own boundary and contour problem.

\subsection{Boundary conditions and operator domain}
The ensemble is part of the operator.  In the scalar description one may impose boundary data on the dual axion variable; in the form/Routhian description one fixes the conserved charge.  These are related by a transform, but they do not define the same Hessian before the transform and boundary form are specified.  The factorization \eqref{eq:appQsquare24} is the BPS endpoint in the chosen charge sector statement.  Moving to a charge-varying scalar ensemble or to a neck-cut half-geometry changes the domain and may introduce modes that are not elements of the BPS endpoint Sasaki--Mukhanov sector.  This is why the theorem supplies the BPS endpoint operator without claiming a universal negative-mode count for every Euclidean saddle built from the same local fields.

\section{General dimensional scaling}

For a $D$-dimensional $O(D)$-invariant throat with
\begin{equation}
 (a')^2=1-\frac{a_0^{2D-4}}{a^{2D-4}}
\end{equation}
in the locally flat regime, the minimal-sphere area scales as $a_0^{D-2}$. If the first integral is normalized so that $E\propto a_0^{2D-4}$, then
\begin{equation}
 \Delta S_D\sim\frac{a_0^{D-2}}{\kappa_D^2}
 \propto E^{1/2}.
 \label{eq:genscale}
\end{equation}
If a different literature convention calls the squared charge parameter $E$, the same statement can appear as $E^{(D-2)/(2D-4)}=E^{1/2}$. Therefore small-neck exponents must be compared only after identifying the precise definition and mass dimension of the non-extremality parameter. The earlier shorthand $E^{1/(D-2)}$ is not convention-independent and should not be quoted without this check.

The scaling law is kinematic; the fluctuation statement is dynamical.  At the BPS endpoint the relevant charge-sector operator is not inferred from the power of $E$ but from the factorization of the constraint-reduced quadratic form.  In flat space this gives \eqref{eq:thresholdsquare20}; in AdS$_5$ the same statement is expressed by the radial Sasaki--Mukhanov operator \eqref{eq:adsQsquare23}, with curvature and boundary-domain effects included in the renormalized radial problem.  Thus the bottom of the spectrum is controlled by the $E=0$ BPS instanton endpoint, not by a small-neck scaling artifact.

\section{Boundary orientation at the neck}

Let $\cM=\cM_-\cup\cM_+$ be cut at $\Sigma$. The scalar boundary terms are
\begin{equation}
 \delta S_-\big|_\Sigma=\int_\Sigma\sqrt h\,n_-^\mu\partial_\mu D\,\delta D,
 \qquad
 \delta S_+\big|_\Sigma=\int_\Sigma\sqrt h\,n_+^\mu\partial_\mu D\,\delta D.
\end{equation}
Smoothness gives $n_+^\mu=-n_-^\mu$ as oriented normals to the two pieces and a single-valued normal derivative in the uncut manifold. Hence the sum vanishes. This cancellation is not available when one keeps only one half without fixing or Legendre-transforming the neck scalar datum.

\end{document}